\newtheorem{proposition}{Proposition}
\begin{document}

\title{Frustration, Entanglement, and Correlations in Quantum Many Body Systems}
\author{U. Marzolino$^1$, S. M. Giampaolo$^2$, and F. Illuminati$^{2,}$\footnote{Corresponding author: illuminati@sa.infn.it}}
\affiliation{$^1$\mbox{Physikalisches Institut, Albert-Ludwigs-Universit\"at Freiburg, Hermann-Herder-Stra\ss e 3, D-79104 Freiburg, Germany}
\\
$^2$\mbox{Dipartimento di Ingegneria Industriale, Universit\`a degli Studi di Salerno,~Via Ponte don Melillo, I-84084 Fisciano (SA), Italy}
}

\date{April 30, 2013}

\begin{abstract}

We derive an exact lower bound to a universal measure of frustration in degenerate ground states of quantum many-body systems. The bound results in the sum of two contributions: entanglement and classical correlations arising from local measurements. We show that average frustration properties are completely determined by the behavior of the maximally mixed ground state. We identify sufficient conditions for a quantum spin system to saturate the bound, and for models with twofold degeneracy we prove that average and local frustration coincide.
\end{abstract}

\pacs{03.65.Ud, 03.67.Mn, 05.30.-d, 75.10.Jm}

\maketitle


Many body systems are typically modeled by Hamiltonians that are sums of local terms.
Each local term operates only on a part of the entire system and acts to minimize the corresponding energy. If different subsystems overlap, the competition among the different local terms can preclude the existence of configurations satisfying all such minimizations simultaneously, a phenomenon known as frustration~\cite{Toulouse1977,Villain1977,Binder1986}. For classical Hamiltonian systems, frustration is associated to some nontrivial geometric property of the system itself~\cite{Binder1986}. On the other hand, due to quantum non-commutativity and entanglement~\cite{Nielsen2004,Acin2007}, classically unfrustrated systems may admit frustrated quantum counterparts~\cite{Wolf2003,Nielsen2004,Giampaolo2010,Eisert2010}. The existence at a qualitative level of a relation between frustration, non-commutativity, and entanglement has motivated recent efforts aiming at qualifying and quantifying frustration in the quantum domain. Frustration criteria, reducing to the Toulouse conditions in the classical case~\cite{Toulouse1977}, have been introduced recently for systems with non degenerate ground states~\cite{Giampaolo2011}, and a quantitative relation between a universal measure of frustration and ground-state bipartite entanglement has been established in the form of an exact inequality. This result cannot be extended unambiguously to degenerate ground states, because of the existence of infinitely many pure ground states with different entanglement properties. In addition, the possibility for mixed ground states introduces a further element, classical statistical correlations, in the quantification of frustration at the quantum level.

The present work introduces a theory of frustration for quantum systems with arbitrarily degenerate ground states. While including the one discussed in Ref.~\cite{Giampaolo2011} as a special case, it provides a general picture according to which the interplay between degeneracy and superselection rules results in a highly nontrivial relation between frustration and different types of correlations, classical and quantum. It provides a unified treatment and a rigorous quantification of frustration, expressed in terms of a universal inequality. The latter assesses the relative weights of geometry, bipartite entanglement, and shared classical correlations, identified as the three fundamental sources of frustration in quantum and classical systems. The unified approach is based on the concept of {\it maximally mixed ground state} (MMGS), that is the statistical mixture with equal a priori weights of all possible pure, degenerate ground states. This state always satisfies all the symmetries of the corresponding quantum Hamiltonian and therefore stands as the natural candidate to provide the relevant information on the global characteristics of a quantum system. Thus the concept of MMGS allows to introduce a complete classification of many-body quantum Hamiltonians, well beyond the limiting cases discussed in Ref.~\cite{Giampaolo2011}. Finally, we determine sufficient conditions for the saturation of the inequality that constitute a further quantum generalization of the classical Toulouse criteria.

Let us consider a many body Hamiltonian $H_T = \sum_{S}h_{S}$ sum of local Hamiltonians $h_{S}$. Frustration is the impossibility for the ground state $\rho_G$ of the total Hamiltonian $H_T$ to be entirely projected in the ground space of every local Hamiltonian $h_S$, that is the local ground space. To be more specific, let us consider a total system $T$, associated to the Hamiltonian $H_T$, bipartite into
a subsystem $S$, corresponding to the Hamiltonian $h_S$, and the remainder $R$. Define then $\rho_S={\rm tr}_R(\rho_G)$, the reduced ground-state density matrix of $S$, obtained by tracing out all the degrees of freedom of $R$ in $\rho_G$. Finally, introduce $\Pi_S$, the projector onto the local ground space of subsystem $S$. A natural way to quantify frustration is to consider the distance between $\rho_G$ and the projector onto the ground space of the local interactions~\cite{Giampaolo2011}:
\begin{equation} \label{f}
f_S=1-{\rm tr}\left(\rho_G \cdot \Pi_S \otimes 1\!\!1_R\right)=1-{\rm tr}_S\left(\rho_S \Pi_S\right) \, .
\end{equation}
\noindent
This quantity measures how much the global ground state fails to accommodate for the ground states of the local interactions.
Let us define $d$ as the rank of $\Pi_S$, namely the degeneracy of the local ground space associated to subsystem $S$. The Cauchy interlacing theorem~\cite{Bhatia1996} yields that the universal measure of frustration $f_S$ is bounded from below by the first $d$ largest eigenvalues
of $\rho_S$, arranged in descending order $\{\lambda_i^\downarrow(\rho_S)\}_i$~\cite{Giampaolo2011}:
\begin{equation} \label{bound}
f_S \geqslant \epsilon^{(d)}_{S} \, , \qquad \epsilon^{(d)}_{S} = 1 - \sum_{i=1}^d\lambda_i^\downarrow(\rho_S) \, .
\end{equation}

For non degenerate, pure ground states $\rho_G = |\Psi_G \rangle \langle \Psi_G|$, the quantity $\epsilon^{(d)}_{S}$ coincides with the distance of $\rho_G$ from the set of states with Schmidt rank less or equal than $d$~\cite{Turgut2010}. Such distance is not a full bipartite entanglement monotone: it vanishes for all separable states and is non-increasing under local operations and classical communication (LOCC)~\cite{Turgut2010}; however, it vanishes also for all entangled states with Schmidt rank $D_{Schmidt} \leq d$. It is then restored to a full entanglement monotone for all entangled states with $D_{Schmidt} > d$. Therefore, for $d=1$ the quantity $\epsilon^{(1)}_{S}$ is a faithful ground-state bipartite entanglement monotone, because for any entangled pure state $D_{Schmidt} \geq 2$. Indeed, $\epsilon^{(1)}_{S}$ coincides with the ground-state bipartite geometric entanglement, defined as the distance of $\rho_G$ from the set of bi-separable pure states $|\phi_S \rangle \otimes |\chi_R \rangle$~\cite{Blasone2008}. On the other hand, the fact that $\epsilon^{(d)}_{S}$ is not necessarily a full entanglement monotone when $d > 1$ reveals the subtlety of the relation between entanglement and frustration. States in a degenerate global ground space are entangled just because of the linearity of quantum mechanics, and yet this {\em a priori} entanglement
needs not be a source of frustration. An exquisite example is the ferromagnetic spin-$1/2$ Ising model on a triangle. This simple model is obviously frustration-free, either classically or quantum mechanically (all local interactions commute). Nevertheless, in the quantum regime, we have that any state of the form $|\psi\rangle = \alpha |\uparrow \uparrow \uparrow\rangle + \beta |\downarrow\downarrow\downarrow\rangle$
is an acceptable ground state that exhibits a nonvanishing entanglement for every possible bipartition as long as $\alpha,\beta\neq 0$. The global ground space is thus doubly degenerate. On the other hand, being the local ground space twofold degenerate as well, for the global ground state it will always be
$D_{Schmidt} \leq d = 2$. Regardless of the values of the superposition coefficients, this implies $\epsilon^{(2)}_{S} = 0$ and therefore the absence of entanglement-induced sources of frustration.

According to Eq.~(\ref{bound}), the ground state can belong to three different classes~\cite{Giampaolo2011}. It will be a {\em frustration free} (FF) state if \mbox{$f_S = \epsilon^{(d)}_{S}=0 \; \forall S$}. It will be an {\em inequality saturating} (INES) state if \mbox{$f_{S} = \epsilon^{(d)}_{S}\; \forall S$}; therefore, an FF state is a particular case of an INES state. Finally, it will be a {\em non inequality saturate state} (non-INES) in all the other situations. As long as the systems being considered admit only one non degenerate ground state, the same classification applies unambiguously also to the corresponding models.

New issues arise if we consider systems possessing different degenerate ground states. In this case the quantification of frustration provided by Eq.(\ref{f}) will yield in general state-dependent results within the same class of models and symmetries. For each model frustration becomes a {\em local}, state-dependent concept, according to the value taken by Eq.(\ref{f}) on each different degenerate ground
state. On one side, this feature confirms that entanglement is a necessary but not sufficient ingredient to characterize and quantify frustration {\em globally} in the quantum domain. On the other hand, it implies that the measure defined in Eq.(\ref{f}) should not be applied separately to each degenerate pure ground state. Rather, it should be evaluated on an appropriate, {\it average} ground state so defined that it contains all the possible information about the global ground space of the system. In this way it will be possible to quantify the {\em global} frustration properties of a model Hamiltonian rather than just its {\em local} features in the different degenerate ground states.

In the following, we will show that identifying the elements, beyond entanglement, that are needed to characterize the global aspects of frustration in the quantum domain, also determines the requirements for a state-independent, global quantification of frustration. To this end, the crucial observation is that statistical mixtures of degenerate ground states are themselves legitimate ground states. Moreover, when
symmetries are conserved, all the different degenerate ground states have the same statistical possibility to be realized. Therefore, one can introduce as the appropriate {\em global average} ground state the {\em maximally mixed ground state} (MMGS), that is the convex combination with equal a priori weights of all the possible degenerate ground states. This principle of {\it a priori} equiprobability guarantees the correct quantification of the global frustration properties of quantum Hamiltonians. The MMGS acts as a
projector on the global ground space and, at variance with single pure ground states, it satisfies all symmetries of the Hamiltonian model being considered. We can thus classify the different models and their global frustration properties with respect to the properties of the MMGS: a model is frustration free on average if its MMGS is frustration free; it is INES on average if its MMGS is INES; and it is otherwise non-INES on average if its MMGS is non-INES. If a system admits a unique, non degenerate ground state, then the local and on average classifications coincide. In all other cases if a model is locally FF (INES) then it is also FF (INES) on average, while in general the inverse does not necessarily hold.

When addressing mixed states, the entanglement-to-frustration relation undergoes essential modifications and generalizations. The quantity $\epsilon^{(d)}_{S}$ ceases in general to be a bipartite entanglement monotone when computed directly on mixed states~\cite{Adesso2003} and must be replaced by its convex roof
\begin{equation}
\label{conv_eps}
E^{(d)}_{S|R} = \inf_{\{p_k,|\psi_k\rangle\}}
\sum_k p_k \epsilon^{(d)}_{S}\left({\rm tr}_R|\psi_k\rangle\langle\psi_k|\right) \, .
\end{equation}
In Eq.~(\ref{conv_eps}) the infimum is taken over all the possible convex decompositions into pure states $|\psi_k\rangle\langle\psi_k|$ of the total system's ground state: $\rho_G=\sum_kp_k|\psi_k\rangle\langle\psi_k|$. The quantities $E^{(d)}_{S|R}$ and $\epsilon^{(d)}_{S}$
always coincide for pure states, and in general differ for mixed states. Physically, this difference comes about because the noise present in the reduced density matrix of a globally mixed state is associated not only to the presence of entanglement, as for pure states, but also to the presence of classical statistical correlations that emerge after local generalized measurements on one subsystem~\cite{Henderson2001,Ollivier2001,Horodecki2005}.

Total correlations (quantum plus classical) in mixed states are usually quantified in terms of the von Neumann entropy; however, the relation holds in general: to every entanglement monotone there corresponds a type of classical or quantum correlation emerging when a local generalized measurement is implemented~\cite{Koashi2004}. For our purposes, we need to evaluate the classical correlations between two subsystems, say $a$ and $b$, in terms of $\epsilon^{(d)}_{a}$. Although the function $\epsilon^{(d)}_{a}$ is not a full monotone, it is anyway a proper quantifier of local mixedness, and as such detects correlations
between subsystems. Consider then a composite systems (made of the subsystems $a$ and $b$) in a mixed state $\rho$, and its reduced state $\rho_a ={\rm tr}_b \rho$. A generalized measurement on $b$ is defined by a set of positive operators $\{M_b (x)\}$, such that $\sum_x M_b (x)=1$~\cite{NielsenChuang2000,Geometry2006}. The
measurement detects the result $x$ with probability $p(x) = {\rm tr}\left( \mathbbm{1}_a \otimes M_b (x)[\rho]\right)$, leaving the system in the state $\rho(x)=\frac{1}{p(x)}\mathbbm{1}_a \otimes M_b (x)[\rho]$. Replacing the von Neumann entropy with $\epsilon^{(d)}_\alpha$ in the general expression for classical correlations~\cite{Henderson2001,Ollivier2001}, one has:
\begin{equation} \label{class.corr}
C^{(d)}_{a|b} = \epsilon^{(d)}_{a} - \min_{\{ M_b (x) \}}\sum_x p(x)
\epsilon^{(d)}_{a}\left({\rm tr}_b \rho(x)\right) \, .
\end{equation}
The classical correlations $C^{(d)}_{a|b}$ are expressed as the difference between the total
correlations $\epsilon^{(d)}_{a}$, evaluated on the reduced state $\rho_a$, and the smallest convex combination of total correlations, obtained as the minimum over all possible local generalized measurements $\{ M_b (x) \}$ on subsystem $b$.
On the other hand, it is well known that any $n$-partite mixed quantum state $\rho$ can be obtained from an $(n+1)$-partite pure state by tracing out the degrees of freedom of an ancillary party $A$. This fact allows to generalize to the case of the geometric quantity $\epsilon^{(d)}_{S}$ the results originally obtained for the von Neumann entropy~\cite{Koashi2004}, according to the following

\noindent [{\bf Theorem 1 - Purification, entanglement, and classical correlations}]:
Given a pure tripartite state $|\psi_{SRA}\rangle\in\mathcal{H}_S\otimes\mathcal{H}_R\otimes\mathcal{H}_A$ and its reduced density matrices $\rho_{SR}={\rm tr}_A|\psi_{SRA}\rangle\langle\psi_{SRA}|$, $\rho_{SA}={\rm tr}_R|\psi_{SRA}\rangle\langle\psi_{SRA}|$ and $\rho_S={\rm tr}_{RA}|\psi_{SRA}\rangle\langle\psi_{SRA}|$, then:
\begin{equation}\label{theorem}
\epsilon^{(d)}_{S} = E^{(d)}_{S|R} + C^{(d)}_{S|A} \; ,
\end{equation}
that is, the total correlations expressed by the geometric quantity $\epsilon^{(d)}_{S}$ are the sum of the bipartite entanglement (convex roof) between $S$ and $R$ and the classical correlations between $S$ and $A$.
The proof of the theorem is provided in the supplementary material~\cite{supp}. We will apply Theorem 1 to the purification of the MMGS with the ancilla $A$, where the presence of classical correlations detected by local generalized measurements on $A$ is a consequence of the degeneracy of the ground state. Conceptually, the ancillary party $A$ can be thought of as a suitable quantum reservoir entangled with the bipartite system $(S|R)$, yielding the MMGS as the on average reduced equilibrium state~\cite{PopescuNaturePhysics2006}. Equivalently, $A$ can be seen as a quantum reference system, such that if one traces over $A$, every ground state of $(S|R)$ is equiprobable, as identified by a complete
set of superselection rules~\cite{BartlettRMP2007}.

Comparing Eqs.~(\ref{theorem}) and (\ref{bound}) we obtain a unified lower bound to frustration encompassing both the non degenerate and degenerate cases:
\begin{equation} \label{newbound}
f_S \geqslant E^{(d)}_{S|R} + C^{(d)}_{S|A} \; .
\end{equation}
The above exact inequality yields that, in the quantum domain, frustration is not only due to the underlying geometry, as for the classical case, and/or to entanglement, as for systems with non degenerate ground states. In general, it depends on the interplay of these two elements with a third source, namely statistical correlations established outside of the $d$-fold degenerate local ground space due to the degeneracy encoded in the MMGS. When the ground state is pure and non degenerate, the general bound Eq.~(\ref{newbound}) evaluated on the MMGS reduces to Eq.~(\ref{bound}), as the classical correlations $C^{(d)}_{S|A}$ vanish and the convex roof $E^{(d)}_{S|R}$ coincides with $\epsilon^{(d)}_{S}$.


Let us illustrate with a simple but nontrivial example the difference between the local and on average characterizations of frustration provided, respectively, by Eq.~(\ref{bound}) and Eq.~(\ref{newbound}). Consider a ring of five spins with periodic boundary conditions described by a ferromagnetic Ising Hamiltonian \mbox{$H=-\sum_{i=1}^5 S_i^x S_{i+1}^x$ }
(\mbox{$S_6^\alpha \equiv S_1^\alpha \; \forall \alpha=x,y,z$}). All local terms in the Hamiltonian commute: in this sense, the model is classical. The global and local ground states are both two-fold degenerate (spin-flip symmetry). Since every element of the global ground space is FF, the model is locally always FF, i.e. FF on each of the different degenerate global ground states, and hence FF on average,
that is FF on the MMGS. Next, let us modify the Hamiltonian from classical Ising to quantum  $XY$:
\mbox{$H=-\sum_{i=1}^5 (S_i^x S_{i+1}^x + \Delta S_i^y S_{i+1}^y)$ }, while the geometry of the system remains unchanged. One would now expect that frustration in the system should arise from the non-commutativity of the local terms in the $XY$ Hamiltonian (now the local ground
space degeneracy $d=1$). Accordingly, one verifies that the model is INES on average, i.e. the inequality Eq.~(\ref{newbound}) is saturated by the MMGS, with the actual values of the frustration measure $f_S$ and of the total correlations $\epsilon_{S}^{(1)}$ depending on the anisotropy $\Delta$. For instance, for $\Delta=0.1$, one has $f_S \equiv \epsilon_{S}^{(1)} \simeq 0.476$ for each of the five different
spin pair-interaction terms. However, regardless of the value of $\Delta$, such model is never locally INES. Indeed, given the doubly degenerate global ground space, let us pick e.g. the ground state that is eigenstate of the parity operator along the $x$ direction with eigenvalue $+1$. For this pure ground state, the measure of frustrations $f_S$ takes always the same values as for the MMGS (this is actually an interesting general property for all twofold degenerate ground states~\cite{supp}). However, the bipartite geometric entanglement $E_{S|R}^{(1)}$ is always well below the total correlations $\epsilon_{S}^{(1)}$. For instance, with $\Delta=0.1$, one has $E_{S|R}^{(1)} \simeq 0.001 \ll \epsilon_{S}^{(1)} \simeq 0.476$.
The impossibility to saturate inequality Eq.~(\ref{bound}) is related to the fact that the selected pure ground state breaks the symmetry of the model Hamiltonian. Modifying the geometry, e.g. by adding a direct antiferromagnetic interaction between the first and the third spin: $H'=S_1^x S_{3}^x + \Delta S_1^y S_{3}^y$, introduces a further, geometric source of frustration. In this case the MMGS no longer saturates inequality Eq.~(\ref{newbound}) and the system ceases to be INES, thus signaling the presence of geometric frustration.

From the above discussions it follows that it would be desirable to identify a set of conditions to detect {\em a priori} the frustration properties of the global ground space of a given model or class of models. These conditions should include as a particular case the ones previously determined for non degenerate ground states~\cite{Giampaolo2011}. Extending the Toulouse criteria~\cite{Toulouse1977} to the quantum domain, we need to identify a {\em prototype} model that is INES on average and then define a
group of local operations under which the property of being INES on average is preserved. We define the prototype model as
\\
\noindent[{\bf Prototype model}]: A quantum spin Hamiltonian of the type:
\begin{equation}\label{heisham}
H\!=\!\sum_{ij}h_{ij}\!=-\sum_{ij,\mu}\! J^\mu_{ij} S_i^\mu S_j^\mu \, ,
\end{equation}
is a {\em prototype} model if there exists at least one local ground space common to all pair local interactions $h_{ij}$ and every local coupling vector $\vec J_{ij}$ has non-negative components.
Having defined the prototype model, we state the following:

\noindent[{\bf Conjecture 1 - INES property and prototype models}]: {\em All prototype models are INES on average.}\\
\noindent [{\bf Conjecture 2 - INES property and local transformations}]: {\em Every model obtained from a prototype model by local
unitary operations on each spin and partial transposition on any arbitrary set of sites $\{K\}$ is still INES on average}.

It is evidently quite hard to prove these two conjectures in all generality. In the supplementary
material~\cite{supp}, we provide an analytical proof for the one-dimensional quantum $XY$ model in the thermodynamic limit, and strong numerical evidence obtained for more than $2\cdot10^{5}$ randomly generated models with exchange interactions on arbitrary random graphs with a total number of sites $N\le9$.
Preservation of the INES property on average under partial transposition is to be expected because this property is directly related to the absence of geometric frustration in the model. Nevertheless, the preservation of the INES property on average is far from trivial. Indeed, contrary to what happens under local unitary transformations, the properties of degenerate ground states of a given
Hamiltonian {\em before} and {\em after} partial transposition need not be related.
As a straightforward example, let us consider a spin-$1/2$ Heisenberg chain (open boundary conditions) of $N=4$ spins with homogeneous nearest-neighbor ferromagnetic couplings. The global ground state is five-fold degenerate while the local ground space is three-fold degenerate and for any couple of interacting spins we have $f_S = \epsilon_{S}^{(3)} =0$. Performing partial transposition on spins $2$ and $4$ (or, equivalently on spins $1$ and $3$) the original model maps in a antiferromagnetic Heisenberg chain possessing both a nondegenerate global and local ground state with $f_S = \epsilon_{S}^{(1)} \simeq 0.067$ for both the pairs of spins $(1,2)$ and $(3,4)$, while $f_S = \epsilon_{S}^{(1)} = 1/2$ for the pair of central spins $(2,3)$. The transformed model has very different ground-state properties compared to the initial one, and yet it remains INES.


In conclusion, we have derived an exact lower bound on a universal measure of frustration in the general case of degenerate ground states. The bound is expressed as the sum of two contributions, one due to bipartite ground-state entanglement and one due to bipartite classical correlations that are established after local generalized measurements on a quantum reservoir or a quantum reference frame. This further source of frustration adds to geometry and entanglement, yielding a rather complex structure that involves several fundamental concepts of quantum physics: entanglement of mixed ground states, classical correlations arising from quantum degeneracy, and the purification of ground states via quantum reservoirs or quantum reference frames.

We have showed that the frustration properties of quantum many-body Hamiltonians are encoded in the maximally mixed ground state (MMGS), that is the convex combination with equal coefficients of all the degenerate pure ground states. Given such on average, global classification, we have determined the sufficient conditions for a quantum spin system to achieve the bound, generalizing the results obtained in the case of models with non degenerate ground states~\cite{Giampaolo2011}. For systems with doubly degenerate ground states we have proved rigorously that local and average frustration coincide: all degenerate ground states, and therefore also the MMGS, exhibit the same frustration properties~\cite{supp}.

The fact that the residual classical correlations after local generalized measurements are identified as a novel source of frustration in quantum many-body systems may open interesting research insights concerning competitions among quantal and non-quantal aspects in Hamiltonian spin dynamics, as in the case of the anisotropic Heisenberg models in an external field, or competition between quantal dynamics and thermal
fluctuations. Future investigations should address the role that the presence/absence of the INES property on average actually plays in computational~\cite{Vidal2010}, information-theoretic~\cite{Lacorre,Lewenstein,Wolf2008,RMP2010}, and thermodynamic characterizations~\cite{Ramirez,Balents} of many body quantum systems. Experimental consequences might soon be derived and tested, as the controlled quantum simulation of magnetism, classical and quantum, already includes the first demonstration of antiferromagnetic spin chains with optical lattices~\cite{Greiner2011}, the probing of small frustrated Heisenberg spin systems by photonic simulators~\cite{Walther2011}, the realization of classically frustrated Ising spins with trapped ions~\cite{Monroe2010} and optical lattices~\cite{Sengstock2011}, up to the recent comprehensive proposals for the quantum simulation of large classes of quantum frustrated magnetism with ion crystals~\cite{Plenio2012} and color centers in diamond~\cite{Plenio2013} that might open the way to the precise verification of long-standing predictions on exotic phases of matter. In this respect, the use of tools inspired by quantum information science can lead to a broader and deeper understanding of collective quantum phenomena via long-range {\it ``entanglement patterns''}~\cite{Wen2010}, beyond the Landau-Ginzburg framework of symmetry breaking and local order parameters~\cite{Jiang2012}.

F.I. and S.M.G. acknowledge financial support through the FP7 STREP Project iQIT, Grant Agreement No. 270843.










\vspace{0.4cm}

\begin{center} {\bf SUPPLEMENTARY MATERIAL}
\end{center}

\vspace{0.2cm}

\section{Monogamy of correlations}

We begin by reviewing the consequences of the Cauchy interlacing theorem~\cite{Bhatia1996} with respect to convexity and concavity of entanglement monotones, either faithful or unfaithful.

\begin{proposition}[Concavity of the generalized geometric entanglement] \label{concave}
The generalized geometric entanglement $\epsilon^{(d)}$ is concave:
\begin{equation} \label{concavity}
\epsilon^{(d)}_{S}(\rho)\geqslant\sum_k p_k \epsilon^{(d)}_S(\rho_k) \, ,
\end{equation}
\noindent
provided $\rho=\sum_k p_k \rho_k \, .$
\end{proposition}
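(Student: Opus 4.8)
The plan is to reduce the asserted concavity of $\epsilon^{(d)}_S$ to the convexity of the sum of the $d$ largest eigenvalues of the reduced state. Writing $\epsilon^{(d)}_S(\rho)=1-\sum_{i=1}^d\lambda_i^\downarrow(\rho_S)$ with $\rho_S={\rm tr}_R\rho$, the inequality~(\ref{concavity}) is equivalent to the statement that $g(\rho):=\sum_{i=1}^d\lambda_i^\downarrow(\rho_S)$ is convex in $\rho$. So the whole task is to establish convexity of this partial eigenvalue sum and then read off the sign flip.

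First I would invoke the same variational characterization that already underlies the bound~(\ref{bound}), where $\epsilon^{(d)}_S$ appears through ${\rm tr}(\rho_S\Pi_S)$ with $\Pi_S$ a rank-$d$ projector. By the Ky Fan maximum principle, itself a consequence of the Cauchy interlacing theorem,
\begin{equation*}
\sum_{i=1}^d\lambda_i^\downarrow(M)=\max_{\Pi}{\rm tr}(\Pi M) \, ,
\end{equation*}
where the maximum runs over all rank-$d$ orthogonal projectors $\Pi$ on $\mathcal{H}_S$ and is attained on the projector onto the span of the $d$ leading eigenvectors of $M$. The essential feature is that this feasible set of projectors is a fixed family, independent of the argument $M$.

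Next I would combine this with the linearity of the partial trace: from $\rho=\sum_k p_k\rho_k$ one gets $\rho_S=\sum_k p_k(\rho_k)_S$, so for each fixed $\Pi$ the functional $\rho\mapsto{\rm tr}(\Pi\rho_S)$ is affine. A pointwise maximum of affine functionals is convex, whence $g\big(\sum_k p_k\rho_k\big)\leqslant\sum_k p_k\,g(\rho_k)$. Substituting into $\epsilon^{(d)}_S=1-g$ and using $\sum_k p_k=1$ then yields~(\ref{concavity}) directly.

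I do not expect a genuine obstacle here: once the max-characterization of the partial eigenvalue sum is in place, the result follows from linearity of the partial trace and the elementary fact that a supremum of affine functions is convex. The only point that warrants care is to check that the optimization set of rank-$d$ projectors on $\mathcal{H}_S$ is the same for all states being mixed, so that the step $\max_\Pi\sum_k\leqslant\sum_k\max_\Pi$ is legitimate; this is guaranteed because $d$ is the local ground-space degeneracy fixed by the subsystem $S$, and is therefore independent of which ground state enters the convex decomposition.
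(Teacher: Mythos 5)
Your proposal is correct and is essentially the paper's own argument: the paper also derives convexity of the partial eigenvalue sum from the Cauchy interlacing theorem, simply unpacking the Ky Fan maximum principle by evaluating $\sum_{i=1}^d\lambda_i^\downarrow(\rho)$ in the eigenbasis of $\rho$, splitting by linearity, and bounding each term $\sum_{i=1}^d\langle i|\rho_k|i\rangle\leqslant\sum_{i=1}^d\lambda_i^\downarrow(\rho_k)$ --- which is exactly your ``the optimal projector for the mixture is suboptimal for each component'' step. Your abstraction (fixed feasible set of rank-$d$ projectors, affine dependence via linearity of the partial trace, pointwise maximum of affine functionals is convex) is just the variational-principle phrasing of the same computation.
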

\begin{proof}
From Cauchy interlacing theorem we have that, for all $k$ and for all orthonormal bases $\{|i\rangle\}_i$,
\begin{equation}
\sum_{i=1}^d\langle i|\rho_k|i\rangle\leqslant\sum_{i=1}^d\lambda_i^\downarrow(\rho_k) \, . \nonumber
\end{equation}
If $\{|i\rangle\}_{i=1,\dots,d}$ are the eigenvectors corresponding to the first $d$ largest eigenvalues
$\{ \lambda_i^\downarrow \}_{i=1,\dots,d}$ of $\rho$, ordered in descending order, one has that:
\begin{eqnarray}
 \sum_{i=1}^d\lambda_i^\downarrow(\rho)&=&\sum_{i=1}^d\langle i|\rho|i\rangle \nonumber \\
 &= &\sum_k \sum_{i=1}^d p_k \langle i |\rho_k|i\rangle \nonumber \\
& \le & \sum_k \sum_{i=1}^d p_k \lambda_i^\downarrow(\rho_k) \, . \nonumber
\end{eqnarray}
\end{proof}

On the other hand, every bipartite mixed state $\rho_{SR}$ of a bipartite system $(S|R)$ can be decomposed into infinite convex combinations or pure states:

\begin{equation}
\label{convexdecomp}
\rho_{SR}=\sum_k p_k |\psi_k\rangle\langle\psi_k|, \qquad p_k\geqslant 0, \qquad \sum_k p_k=1 \; ,
\end{equation}
\noindent
and a faithful measure of the entanglement in $\rho_{SR}$ can be defined via the convex roof
construction over pure-state entanglement:

\begin{equation}
\label{geometricmonotone}
E^{(d)}_{S|R} = \inf\sum_k p_k \epsilon^{(d)}_S({\rm tr}_R|\psi_k\rangle\langle\psi_k|) \; ,
\end{equation}

\noindent
where the infimum is taken over all the possible convex decompositions into pure states,
Eq.~(\ref{geometricmonotone}).

The degree of mixedness in the reduced density matrix $\rho_S$ of a bipartite mixed state
$\rho_{SR}$ is directly related not only to the degree of bipartite entanglement of $\rho_{SR}$
but also to a set of classical statistical correlations, namely all the correlations that
are left after a local generalized measurement is performed on one party (subsystem) of the total system~\cite{Henderson2001,Ollivier2001}. Given a bipartite system $(S|A)$, a generalized measurement (POVM) on party $A$ is described by a set of positive operators $\{M_x\}$, such that $\sum_x M_x=1$ \cite{NielsenChuang2000}. In a bipartite mixed state $\rho_{SA}$, with subsystem $S$ in the reduced state $\rho_S={\rm tr}_A\rho_{SA}$, a generalized measurement on $A$ yields some result $x$ with probability \mbox{$p_x={\rm tr}\left[ (\mathbbm{1}_S\otimes M_x ) \rho_{SA} \right]$}, leaving the total system in the state $\rho_x=\frac{1}{p_x} \left( \mathbbm{1}_S \otimes M_x \right) \rho_{SA}$.

A {\it bona fide} measure of the classical correlations $C^{(d)}_{S|A}$ existing between the two subsystems of globally bipartite system, satisfying a minimal set of axioms, was defined as the difference between the total correlations in the global state $\rho_{SA}$, measured by the von Neumann entropy of the reduced state $\rho_S$ and the residual quantum correlations after a local generalized measurement on party $A$ \cite{Henderson2001,Ollivier2001}. Here we introduce an analogous definition, except for the crucial difference that the von Neumann entropy is replaced by the concave function $\epsilon^{(d)}_S$:

\begin{equation}
\label{classicalcorrelations}
C^{(d)}_{S|A}=\epsilon^{(d)}_S(\rho_S)-\min_{\{M_x\}}\sum_x p_x \epsilon^{(d)}_S({\rm tr}_A\rho_x) \; .
\end{equation}


As we will now show, this is still a {\it bona fide} measure of bipartite classical correlations that satisfies the Koashi-Winter monogamy relation originally proven for the entanglement of formation~\cite{Koashi2004}:

\begin{proposition}[Koashi-Winter monogamy of correlations] \label{monogamy.corr}
Given a pure three-partite state $|\psi_{SRA}\rangle\in\mathcal{H}_S\otimes\mathcal{H}_R\otimes\mathcal{H}_A$ and their reduced density
matrices $\rho_{SR}={\rm tr}_A|\psi_{SRA}\rangle\langle\psi_{SRA}|$, $\rho_{SA}={\rm tr}_R|\psi_{SRA}\rangle\langle\psi_{SRA}|$ and
$\rho_S={\rm tr}_{RA}|\psi_{SRA}\rangle\langle\psi_{SRA}|$, the following relation holds

\begin{equation}
\epsilon^{(d)}_S(\rho_S) = E^{(d)}_{S|R} + C^{(d)}_{S|A} \; .
\end{equation}

\end{proposition}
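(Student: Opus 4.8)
The plan is to reduce the claimed identity to the single statement
\[
E^{(d)}_{S|R} \;=\; \min_{\{M_x\}}\sum_x p_x\,\epsilon^{(d)}_S\big({\rm tr}_A\rho_x\big)\,,
\]
since inserting this into the definition~(\ref{classicalcorrelations}) of $C^{(d)}_{S|A}$ immediately gives $C^{(d)}_{S|A}=\epsilon^{(d)}_S(\rho_S)-E^{(d)}_{S|R}$, which is the assertion of Proposition~\ref{monogamy.corr}. The whole proof therefore amounts to showing that the optimal local measurement on the ancilla $A$ reconstructs exactly the convex-roof entanglement between $S$ and $R$.

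First I would set up the correspondence, due to Hughston, Jozsa and Wootters, between generalized measurements on the purifying party $A$ and convex decompositions of $\rho_{SR}$. Since $|\psi_{SRA}\rangle$ purifies $\rho_{SR}$, any POVM $\{M_x\}$ on $A$, with $p_x={\rm tr}[(\mathbbm{1}_{SR}\otimes M_x)|\psi_{SRA}\rangle\langle\psi_{SRA}|]$, prepares the conditional states $\rho^{(x)}_{SR}=\frac{1}{p_x}{\rm tr}_A[(\mathbbm{1}_{SR}\otimes M_x)|\psi_{SRA}\rangle\langle\psi_{SRA}|]$, and for rank-one elements $M_x$ these are pure, $\rho^{(x)}_{SR}=|\psi_x\rangle\langle\psi_x|$; conversely every pure-state decomposition $\rho_{SR}=\sum_x p_x|\psi_x\rangle\langle\psi_x|$ arises from some rank-one POVM on $A$, provided $A$ is large enough to purify $\rho_{SR}$. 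The second ingredient is the elementary identity ${\rm tr}_A\rho_x={\rm tr}_R\rho^{(x)}_{SR}$, both sides being the reduced state on $S$ of the common conditional state $\frac{1}{p_x}(\mathbbm{1}_{SR}\otimes M_x)|\psi_{SRA}\rangle\langle\psi_{SRA}|$; it follows by tracing out $R$ and $A$ in either order and using $\rho_{SA}={\rm tr}_R|\psi_{SRA}\rangle\langle\psi_{SRA}|$. For rank-one POVMs this turns the measured quantity into $\sum_x p_x\,\epsilon^{(d)}_S({\rm tr}_R|\psi_x\rangle\langle\psi_x|)$, which is precisely the functional minimized in the convex roof~(\ref{geometricmonotone}); the infimum over rank-one POVMs therefore returns $E^{(d)}_{S|R}$.

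It remains to prove that restricting the minimization in~(\ref{classicalcorrelations}) to rank-one POVMs loses nothing, and this is where Proposition~\ref{concave} is essential. Given any POVM, I would split each element into rank-one pieces $M_x=\sum_j M_{x,j}$; the coarse-grained conditional reduced state is then the convex mixture ${\rm tr}_A\rho_x=\sum_j (p_{x,j}/p_x)\,{\rm tr}_A\rho_{x,j}$, so concavity of $\epsilon^{(d)}_S$ gives $p_x\,\epsilon^{(d)}_S({\rm tr}_A\rho_x)\geqslant\sum_j p_{x,j}\,\epsilon^{(d)}_S({\rm tr}_A\rho_{x,j})$. Summing over $x$ shows that the rank-one refinement never increases the average residual correlation, so together with the inclusion of rank-one POVMs as a subset the two inequalities pinch the optimum and force the minimum to be attained on rank-one POVMs, where it equals $E^{(d)}_{S|R}$. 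I expect this concavity step to be the genuine technical heart of the argument: it is exactly what replaces the concavity of the von Neumann entropy in the original Koashi--Winter derivation, and it is the reason the relation survives the passage from a faithful entropic monotone to the unfaithful geometric quantity $\epsilon^{(d)}_S$. The residual subtlety is purely bookkeeping, namely checking that the Hughston--Jozsa--Wootters map ranges over all pure decompositions of $\rho_{SR}$ so that the infimum over rank-one POVMs genuinely coincides with the convex roof.
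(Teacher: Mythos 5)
Your proof is correct and follows essentially the same route as the paper: the paper's proof simply invokes the original Koashi--Winter argument of Ref.~\cite{Koashi2004}, observing that concavity is the only property of the von Neumann entropy entering that proof and substituting the concavity of $\epsilon^{(d)}_S$ established in Proposition~\ref{concave}. What you have done is write out the cited argument explicitly --- the Hughston--Jozsa--Wootters correspondence between POVMs on the purifying party and pure-state decompositions of $\rho_{SR}$, the tracing identity ${\rm tr}_A\rho_x={\rm tr}_R\rho^{(x)}_{SR}$, and the rank-one refinement step where concavity does the work --- so your version is a self-contained expansion of the paper's citation-style proof rather than a genuinely different approach.
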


\begin{proof}
An analogous proposition was originally proved in~\cite{Koashi2004}, with the entanglement of formation replacing the geometric monotone of entanglement $E^{(d)}_{S|R}$ defined by Eq.~(\ref{geometricmonotone}), and the standard classical correlations introduced in Refs.~\cite{Henderson2001,Ollivier2001} replacing the classical correlations $C^{(d)}_{S|A}$ defined by Eq.~(\ref{classicalcorrelations}). All these differences reduce to the function $\epsilon^{(d)}_S$ replacing the von Neumann entropy, used throughout in Ref.~\cite{Koashi2004}. On the other hand, the only property of the von Neumann entropy which enters in the proof for the original proposition of Ref.~\cite{Koashi2004} is concavity. Since the function $\epsilon^{(d)}_S$ is concave, as proved in Proposition~\ref{concave}, it follows that Proposition \ref{monogamy.corr} holds.
\end{proof}

If the state $\rho_{SR}$ is pure, then the classical correlations $C^{(d)}_{S|A}$ vanish, and the total correlations $\epsilon^{(d)}_S$ reduce to the generalized geometric entanglement $E^{(d)}_{S|R}$. If $\rho_{SR}$ is mixed, there are many possible purifications $|\psi_{SRA}\rangle$ of which $\rho_{SR}$ is the reduced state upon tracing out the degrees of freedom of party $A$. This fact implies in particular that:

\begin{proposition} The classical correlations $C^{(d)}_{S|A}$ do not depend on the purification procedure. Equivalently, all purifications $|\psi_{SRA}\rangle$ coincide up to a local unitary transformation on party $A$.
\end{proposition}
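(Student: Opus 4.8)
The plan is to reduce the first assertion (purification-independence of $C^{(d)}_{S|A}$) to the second one (uniqueness of purification up to a local unitary on $A$), and to obtain the latter from the standard uniqueness-of-purification theorem. The structural input is the following well-known fact~\cite{NielsenChuang2000}: once $\rho_{SR}$ is fixed, any two purifications $|\psi_{SRA}\rangle$ and $|\psi'_{SRA}\rangle$ with ancillas of equal dimension satisfy $|\psi'_{SRA}\rangle=(\mathbbm{1}_{SR}\otimes U_A)|\psi_{SRA}\rangle$ for some unitary $U_A$ on $\mathcal{H}_A$ (and, for unequal dimensions, by a partial isometry $V_A$ with $V_A^\dagger V_A=\mathbbm{1}$ after embedding into the larger space). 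This is precisely the ``Equivalently'' clause, so the remaining task is to show that $C^{(d)}_{S|A}$, as defined in Eq.~(\ref{classicalcorrelations}), is invariant under this local-unitary freedom on $A$.

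First I would dispose of the term $\epsilon^{(d)}_S(\rho_S)$: it depends only on $\rho_S={\rm tr}_R\rho_{SR}$, which is determined by $\rho_{SR}$ alone, hence is manifestly purification-independent. The crux is the minimized term, and the key observation is that the measurement in Eq.~(\ref{classicalcorrelations}) is performed on exactly the party $A$ on which the purification freedom acts. Writing the transformed reduced state as $\rho'_{SA}=(\mathbbm{1}_S\otimes U_A)\rho_{SA}(\mathbbm{1}_S\otimes U_A^\dagger)$, cyclicity of the full trace gives $p'_x={\rm tr}\big[(\mathbbm{1}_S\otimes M_x)\rho'_{SA}\big]={\rm tr}\big[(\mathbbm{1}_S\otimes\tilde M_x)\rho_{SA}\big]$ with $\tilde M_x:=U_A^\dagger M_x U_A$, while cyclicity of the partial trace over $A$ gives ${\rm tr}_A\rho'_x=p_x'^{-1}{\rm tr}_A\big[(\mathbbm{1}_S\otimes\tilde M_x)\rho_{SA}\big]$. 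Thus measuring $\{M_x\}$ on the new purification produces exactly the same outcome probabilities and the same post-measurement reduced states on $S$ as measuring $\{\tilde M_x\}$ on the original one.

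I would then close the argument by noting that conjugation by a unitary preserves positivity and the completeness relation, $\sum_x\tilde M_x=U_A^\dagger(\sum_x M_x)U_A=\mathbbm{1}$, so that $\{M_x\}\mapsto\{\tilde M_x\}$ is a \emph{bijection} of the set of POVMs on $A$. Consequently the family of values $\sum_x p_x\,\epsilon^{(d)}_S({\rm tr}_A\rho_x)$ swept out as the POVM varies is identical for the two purifications, whence the two minima coincide. Together with the invariance of $\epsilon^{(d)}_S(\rho_S)$ this yields $C^{(d)}_{S|A}(|\psi'_{SRA}\rangle)=C^{(d)}_{S|A}(|\psi_{SRA}\rangle)$, establishing purification-independence.

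The main obstacle is not the symmetry argument, which is clean, but the case of \emph{unequal} ancilla dimensions, where the relating map is only an isometry $V_A$ rather than a unitary. There I would need to check that $\tilde M_x=V_A^\dagger M_x V_A$ is still a valid POVM on the smaller space and, more delicately, that enlarging the ancilla cannot change the minimum: intuitively, any measurement on a larger ancilla can be coarse-grained to one on the smaller space and conversely any POVM on the smaller space lifts to the larger one, so the attainable values—and hence the infima—agree. The routine verifications that the convention $\rho_x=p_x^{-1}(\mathbbm{1}_S\otimes M_x)\rho_{SA}$ yields a bona fide reduced state ${\rm tr}_A\rho_x$ and that the partial-trace cyclicity step is legitimate I would relegate to a brief computation.
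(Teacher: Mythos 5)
Your proof is correct and follows essentially the same route as the paper's: reduce the purification freedom to a local unitary on party $A$ (the paper reproves this fact by hand via the eigendecomposition of $\rho_{SR}$, treating the degenerate-eigenspace case explicitly, where you instead cite the standard uniqueness theorem of Ref.~\cite{NielsenChuang2000}), and then absorb that unitary into the POVM minimization in Eq.~(\ref{classicalcorrelations}). If anything you are more careful than the paper on the second step, which it dispatches in one sentence---you spell out the bijection $\{M_x\}\mapsto\{U_A^\dagger M_x U_A\}$ on POVMs and the matching of outcome probabilities and post-measurement states, and you flag the unequal-ancilla-dimension subtlety that the paper implicitly sidesteps by always purifying with an orthonormal family $\{|a_A^{(j)}\rangle\}_j$ of fixed size.
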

\begin{proof}
In order to prove this property explicitly, one needs first to recall that, by construction,
the classical correlations $C^{(d)}_{S|A}$, just like entanglement, are invariant under local
unitary operations~\cite{Henderson2001}. Moreover, let us recall that purification of a mixed state $\rho_{SR}$ amounts to attaching an orthonormal basis $\{|a_A^{(j)}\rangle\}_j$ to its eigenvectors $\{|v_{SR}^{(j)}\rangle\}_j$ as follows:

\begin{eqnarray}
& & \rho_{SR}=\sum_j p_j|v_{SR}^{(j)}\rangle\langle v_{SR}^{(j)}|\rightarrow \nonumber \\
& & \rightarrow\sum_j \sqrt{p_j}|v_{SR}^{(j)}\rangle|a_A^{(j)}\rangle=|\psi_{SRA}\rangle \, .
\end{eqnarray}

\noindent
If $\{|a_A^{(j)}\rangle\}_j$ were not an orthonormal basis, the reduced state would not coincide with $\rho_{SR}$. If $\rho_{SR}$ does not possess degenerate eigenspaces, then the only freedom is on the choice of the basis $\{|a_A^{(j)}\rangle\}_j$, which is equivalent to performing a local unitary operation on party $A$. If $\rho_{SR}$ possesses some degenerate eigenspace, there is freedom in the choice of its eigenvectors
$\{|v_{SR}^{(j)}\rangle\}_j$. One can purify the degenerate eigenspace ${\rm span}\{|v_{SR}^{(j)}\rangle \, | \, j\in\mathcal{P}\}$ of $\rho_{SR}$ with eigenvalues $p$, in the following ways:

\begin{eqnarray}
& &p \sum_{j\in\mathcal{P}} |v_{SR}^{(j)}\rangle\langle v_{SR}^{(j)}|=p \sum_{j\in\mathcal{P}} |\tilde v_{SR}^{(j)}\rangle\langle
\tilde v_{SR}^{(j)}|\rightarrow \nonumber \\
& & \rightarrow\sqrt{p} \sum_{j\in\mathcal{P}} |\tilde v_{SR}^{(j)}\rangle|a_A^{(j)}\rangle, \nonumber \\
& & |\tilde v_{SR}^{(j)}\rangle= \sum_{l\in\mathcal{P}} \xi_{j,l}|v_{SR}^{(l)}\rangle \, ,
\end{eqnarray}

\noindent
where $\xi_{j,l}$ are the matrix element of a unitary matrix, in order for $\{|\tilde v_{SR}^{(j)}\rangle\}$ to be orthogonal vectors. Thus

\begin{eqnarray}
& & \sqrt{p} \sum_{j\in\mathcal{P}} |\tilde v_{SR}^{(j)}\rangle|a_A^{(j)}\rangle=\sqrt{p} \sum_{l\in\mathcal{P}} |v_{SR}^{(l)}
\rangle|\tilde a_A^{(l)}\rangle, \nonumber \\
& & |\tilde a_A^{(l)}\rangle=\sum_{j\in\mathcal{P}} \xi^T_{l,j}|a_A^{(j)}\rangle \, ,
\end{eqnarray}

\noindent
and the freedom of purification reduces again to a unitary operation on the party $A$. This local unitary freedom can be considered as a part of the POVM $\{ M_x \}$. Since the minimization over all the POVMs enters in the definition Eq.~(\ref{classicalcorrelations}), the classical correlations $C^{(d)}_{S|A}$ are independent of the purification.
\end{proof}

\subsection{State-independent frustration in systems with doubly degenerate global ground states}

Consider spin-$1/2$ models described by Hamiltonians of the type
\begin{equation}
\label{heisenberg}
H\!=\!-\sum_{ij,\mu}\! J^\mu_{ij} S_i^\mu S_j^\mu \, ,
\end{equation}
where the $S_i^\mu$ operators are the usual spin operators along the $\mu$ direction acting on the $i$-th spin. Such Hamiltonians are invariant under the action of the parity operator $P_\alpha=\prod_j\sigma^\alpha_j$ (with $\alpha=x,y,z$), that is $[H,P_{\alpha}] = 0$. On the other hand, the parity operators along different directions do not commute when the system is made of an odd number of spins. This fact implies that the global ground state of $H$ cannot have degeneracy $d_G < 2$. In other words, the global ground state cannot be unique. In the class of Hamiltonians Eq.~(\ref{heisenberg}) we consider the case in which the global ground state is precisely two-fold degenerate. This instance is realized by the $XY$ model, either in a finite chain with an odd number of spins~\cite{Takahashi} or in the thermodynamic limit~\cite{Sachdev}. We emphasize that higher degeneracies ($d_G > 2$) may occur either for models such that there are strong resonances in the Hamiltonian parameters, and/or such that they obey additional symmetries, so that indeed twofold degeneracy ($d_G = 2$) holds typically. According to the aforementioned invariance under parity transformations, we can introduce as a basis for the global ground space the two eigenstates of the parity operator along a fixed direction $\alpha$ with eigenvalues $\pm 1$. These two eigenstates are denoted, respectively, as the even and odd states $|+_\alpha\rangle$ and $|-_\alpha\rangle$. Up to a global phase, one has that $|\mp_\alpha\rangle = P_\beta|\pm_\alpha\rangle$, with $\beta \perp \alpha$. As a consequence, the measure of frustration $f_S$ coincides on both states for every spin pair. All pure global ground states can be written as $|\psi\rangle = a|+_\alpha\rangle + b|-_\alpha\rangle$, with $|a|^2+|b|^2=1$. Given a spin pair $S$ and $\Pi_S$ the projector on any Bell state defined on $S$, one has that
\begin{eqnarray} \label{psi}
{\rm tr}(|\psi\rangle\langle\psi|\Pi_S)\!\!\! & = &\!\!\!  {\rm tr}\!\left[\left(|a|^2|+_\alpha\rangle\langle+_\alpha|+|b|^2|-_\alpha\rangle\langle
-_\alpha|\right)\Pi_S\right]  \nonumber \\
& &\!\!\! + a^*b \, {\rm tr}(|+_\alpha\rangle\langle -_\alpha|\Pi_S)+{\rm c.c.} \; .
\end{eqnarray}
Recalling the invariance of the Hamiltonian under the parity operations and that $|-_\alpha\rangle=P_\beta|+_\alpha\rangle$, one has further that
\begin{equation}
{\rm tr}\left(|+_\alpha\rangle\langle+_\alpha|\Pi_S\right)={\rm tr}\left(|-_\alpha\rangle\langle-_\alpha|\Pi_S \right) \; .
\end{equation}
To evaluate terms in the r.h.s. of Eq.~\ref{psi} one can re-write the states $|\pm_\alpha\rangle$ in the relative state representation~\cite{everett1957} with respect to the Bell basis of spin pair $S$. Taking into account that every Bell state is an eigenstate of some parity operator, one has that
\begin{equation}
|\pm_\alpha\rangle\!\!  = \!\! \sum_{\varepsilon = \pm 1}\big(c^\pm_\varepsilon|\psi_\varepsilon\rangle_S|\Psi^\pm_\varepsilon\rangle_R
+ d^\pm_\varepsilon|\varphi_\varepsilon\rangle_S|\Phi^\pm_\varepsilon\rangle_R\big) \; ,
\end{equation}
where $\varepsilon$ runs on the possible eigenvalues $\{\pm 1\}$ of the parity operators $P_\alpha$ acting on subsystem $S$, $|\psi_\varepsilon\rangle_S$ and $|\varphi_\varepsilon\rangle_S$ are the two Bell states
of spin pair $S$ such that when $P_\alpha$ is applied on them it yields the eigenvalue $\varepsilon$, while $|\Psi^\pm_\varepsilon\rangle_R$ and $|\Phi^\pm_\varepsilon\rangle_R$ are states defined on the remainder $R$ of the entire system. Since $|\pm_\alpha\rangle$ are eigenstates of $P_\alpha$, also $|\Psi^\pm_\varepsilon\rangle_R$ and $|\Phi^\pm_\varepsilon\rangle_R$ must be eigenstates of
the same parity operator when acting on the Hilbert space of $R$. Moreover, because $|+_\alpha\rangle$ and $|-_\alpha\rangle$ yield two different values of the parity, we must have that
$|\Psi^+_\varepsilon\rangle_R$ (or $|\Phi^+_\varepsilon\rangle_R$) must yield a different value of the parity than $|\Psi^-_\varepsilon\rangle_R$ (or $|\Phi^-_\varepsilon\rangle_R$). Collecting these simple facts it is straightforward to verify that
\begin{equation} \label{psi1}
{\rm tr}(|\psi\rangle\langle\psi|\Pi_S) = {\rm tr} \left(|+_\alpha\rangle\langle+_\alpha|\Pi_S\right) \; .
\end{equation}


Recalling the definition of the measure of frustration $f_S$, Eq.~(1) in the main text of the present letter~\cite{main}, from Eq.~(\ref{psi1}) it follows that the {\it local} frustration, that is $f_S$
evaluated in any degenerate pure ground state, does not depend on the chosen ground state in the case of a doubly degenerate ground space. Moreover, since the measure $f_S$ is linear, the {\it global}, on average frustration of the maximally mixed ground state (MMGS), that is the convex combination with equal weights of all the pure degenerate ground states, is the convex combination with equal weights of the {\it local} frustrations of the pure degenerate ground states. Since we have just seen that for doubly degenerate ground spaces the {\it local} measures of frustration coincide for all pure degenerate ground states, it follows that the {\it global} and {\it local} frustrations coincide.

\section{Evidence supporting Conjectures 1 - INES and prototype models; and Conjecture 2 - INES and partial transposition}

\subsection{Numerical results for random graphs with a finite number of spins}

Here we will discuss in detail the verification of the two Conjectures put forward in the main text of the present letter~\cite{main}, that is {\bf Conjecture 1}: {\em All prototype models are INES on average},
and {\bf Conjecture 2}: {\em Every model $H'=H^{\mathrm{T}_K}$ obtained from a prototype model $H$ by partial transposition on any arbitrary set of sites $\{K\}$ is still INES on average}.

In order to gain as much insight as possible on the validity of these two Conjectures, we have analyzed large numbers of randomly generated Hamiltonians without geometric frustration, and checked that they satisfy the equality $f_S = \epsilon_{S}^{(d)}$ for all pairs of spins (subsystems) $S$ entering in the local interaction terms $h_S$. Instances of the prototype model have been generated, where the gauge was fixed at the unitary sector of the gauge group, since this symmetry is trivially proven analytically. In order to check Conjecture 2, a random gauge transformation consisting of pure parity transformations was applied. In total
$2 \times 10^{5}$ geometric frustration-free Hamiltonians have been checked by exact diagonalization. No single instance was found that fails to satisfy $f_S = \epsilon_{S}^{(d)}$ for each pair of spins.

The numerical tests were performed according to the following route. We have generated random models of the $XYZ$ (general anisotropic gapped spin-$1/2$ models with exchange interactions), $XXZ$ (gapless Heisenberg) and $XXX$ (gapless isotropic Heisenberg), satisfying the conditions described in the main text~\cite{main}, either with homogeneous or inhomogeneous nearest-neighbor couplings. The models have been generated first by generating a connectivity graph $(\mathcal V,\mathcal E)$ with $|\mathcal V|=n$ vertices connected by edges $\mathcal E$. A spin-$1/2$ operator was placed at each vertex. For each edge $(ij)\in\mathcal E$, an interaction of the $XYZ$, or $XXZ$, or $XXX$ type was generated. Depending on whether an homogeneous or inhomogeneous model was produced, the interactions were either generated by replicating them identically for all edges, or all the single interactions were generated independently. Finally, either the identity or a parity transformation were randomly applied to every spin, each with probability $1/2$.
After having generated randomly both the lattice and the model Hamiltonian, we have diagonalized numerically the different Hamiltonians and for all pairs of nearest-neighbor interacting spins
we have determined the reduced density matrices and evaluated the measure of frustration $f_S$ and th geometric quantity $\epsilon_S^{(d)}$. All the numerical tests have been performed programming with MATHEMATICA 7.

In Tables~\ref{table1}, \ref{table2} and \ref{table3} we summarize the results of the extended numerical analysis, reporting the total number of models generated for each type of interaction, the number of {\em Accepted} Hamiltonians (those for which the INES condition $f_S = \epsilon_S^{(d)}$ is satisfied for all bonds
$(ij)\in\mathcal E$), and the number of {\em Rejected} Hamiltonians (those for which the INES condition $f_S = \epsilon_S^{(d)}$ is violated at least for some bond). The total number of spins (vertices) ranged from $5$ to $9$.The conclusion is that no rejected Hamiltonian has been found, meaning that no violation of Conjectures 1 and 2 could be identified.
\begin{table}[htbp]
\begin{center}
\begin{tabular}{ccc}
$n$~~& Accepted&Rejected\\
\hline
\hline
5&30000&0\\
\hline
7&20000&0\\
\hline
9&10000&0\\
\hline
\end{tabular}
\end{center}
\caption{\label{table1} Results for $XYZ$ models.}
\begin{center}
\begin{tabular}{ccc}
$n$& Accepted&Rejected \\
\hline
\hline
5&30000&0\\
\hline
7&20000&0\\
\hline
9&10000&0\\
\hline
\end{tabular}
\end{center}
\caption{\label{table2} Results for $XXZ$ models.}
\begin{center}
\begin{tabular}{ccc}
$n$& Accepted&Rejected\\
\hline
\hline
5&30000&0\\
\hline
7&20000&0\\
\hline
9&10000&0\\
\hline
\end{tabular}
\end{center}
\caption{\label{table3} Results for $XXX$ models.}
\end{table}%

\subsection{Analytical results for the $XY$ chain}

\begin{figure}[t]
\begin{center}
\includegraphics[width=7cm]{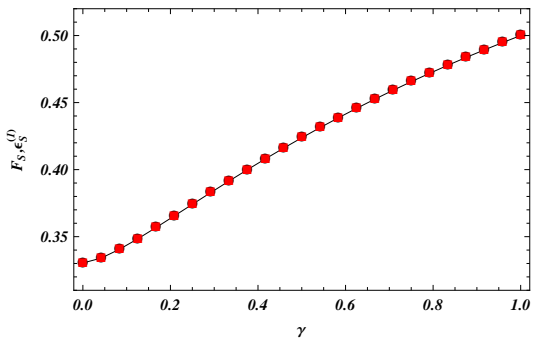}
\end{center}
\caption{Behavior of the bipartite geometric entanglement $\epsilon_S^{(1)}$ (black solid line) and of the universal measure of frustration $f_S$ (red dots) as functions of the anisotropy parameter $\gamma$ for a pair of neighboring sites in the $XY$ quantum spin-$1/2$ chain in the thermodynamic limit.}
\label{figxy}
\end{figure}

Our conjectures can be proved analytically for the case of the $XY$ quantum spin chain in the thermodynamic limit, exploiting the exact solution available for the model~\cite{Barouch}, whose Hamiltonian reads
\begin{equation}\label{hamxy}
H_{XY}\!=\!\frac{1}{2} \sum_{i}\! (1+\gamma) S_i^x S_{i+1}^x+ (1-\gamma) S_i^y S_{i+1}^y \; ,
\end{equation}
where $\gamma$ is the anisotropy parameter that can vary in the interval $[0,1]$, realizing the standard (classical) Ising model ($\gamma=1$), the generic (quantum and gapped) $XY$ model ($0<\gamma<1$), and the gapless, critical $XX$ model ($\gamma=0$). Except for the Ising case ($\gamma=1$) the local interactions between neighboring spins allow to accommodate a unique non degenerate local ground state while
the global ground state is doubly degenerate~\cite{Sachdev,Takahashi} except for the $XX$ point ($\gamma=0$) at which the global degeneracy is removed. Throughout, we consider the thermodynamic limit of $XY$ chains with periodic boundary conditions, so that translational invariance is enforced at all stages. In Fig.~\ref{figxy} we thus report the behavior of the bipartite geometric entanglement $\epsilon_S^{(1)}$ as a function of $\gamma$ (black solid line) for an arbitrarily chosen pair $S$ of neighboring spins and we compare it with the measure of frustration $f_S$ (red dots) for $0\le\gamma<1$. One sees that the two quantities coincide exactly for all values of the anisotropy parameter, proving exactly our two Conjectures
for this class of models. In the plot, the point $\gamma=1$ is excluded. In fact, at this point the local ground state becomes doubly degenerate and one must then consider the geometric quantity $\epsilon_S^{(2)}$ rather than $\epsilon_S^{(1)}$. At $\gamma=1$ the $XY$ chain turns in the classical, one-dimensional, frustration-free Ising model on a infinite chain. Indeed, it is straightforward to verify that both $\epsilon_S^{(2)}$ and $f_S$ vanish identically for this case.

\end{document}